\documentclass[11pt,a4paper]{article}
\usepackage[margin=2.25cm,headheight=15pt]{geometry}
\usepackage[T1]{fontenc}
\usepackage{lmodern}
\usepackage{amsmath,amssymb,amsthm,bm,mathtools}
\usepackage{graphicx,booktabs,array,enumitem,fancyhdr,caption}
\usepackage[hidelinks]{hyperref}
\setlist{nosep,leftmargin=1.5em}
\newtheorem{theorem}{Theorem}[section]

\newtheorem{corollary}[theorem]{Corollary}
\theoremstyle{definition}

\newcommand{\E}{\mathbb E}

\newcommand{\R}{\mathbb R}
\newcommand{\tr}{\operatorname{tr}}

\newcommand{\ph}{\operatorname{ph}}
\newcommand{\ii}{\mathrm i}
\newcommand{\normunit}{\operatorname{n}}
\newcommand{\CN}{\mathcal{CN}}

\hypersetup{pdftitle={Optimal Receive Design on the Observation Axis for RSMA Satellite-UAV Downlinks},pdfauthor={Xianwei Meng}}
\begin{document}
\begin{center}
{\LARGE\bfseries Optimal Receive Design on the Observation Axis\\[.3em]for RSMA Satellite--UAV Downlinks}\par
\vspace{.65em}
{\large Xianwei Meng}\par
Hefei University of Technology, Hefei, China
\end{center}
\vspace{.4em}
\begin{abstract}
A receiver can change its observation state while the physical signal remains the same. In rate-splitting multiple access (RSMA) downlinks, this distinction separates receive-beam design and phase-reference transport from the motion of the satellite--UAV line of sight. We formulate receive optimization within the physical--observation dual-axis (PODA) structure and obtain three complementary results. For a known line-of-sight response and positive-definite receiver-noise covariance, a noise-weighted matched combiner simultaneously maximizes every decoding SINR and the minimum user rate. Under directional uncertainty, the principal eigenvector of the conditional response covariance globally maximizes mean array gain; for a Gaussian directional model, an equal-amplitude receiver is optimized by steering towards the conditional mean. Mean rate requires a different objective. We prove that the RSMA common-rate water level is jointly concave in the users' receive gains and derive a semidefinite relaxation with a computable global upper bound. A feasible rank-one beam can therefore be compared directly with the optimum of the finite-scenario problem. We also derive a precision-weighted linear predictor that suppresses uncertain physical-velocity estimates after observation-frame transport. Paired dynamic simulations reduce direction-cosine RMSE by 51.29\% relative to unrestricted nine-point extrapolation. In a separate uncertainty experiment, direct rate optimization increases mean minimum spectral efficiency from 0.32425 to 0.41712 bit/s/Hz, an improvement of 28.64\%; the corresponding design objective is within $2.28\times10^{-4}$ bit/s/Hz of a global upper bound. The phase-transport equation fixes a coherent representative of each optimal beam without changing its gain.
\end{abstract}
\noindent\textbf{Index terms:} physical--observation dual-axis (PODA) structure; RSMA; receive beamforming; phase transport; directional uncertainty; linear prediction; semidefinite relaxation.

\section{Introduction}
Directional satellite downlinks depend on both the propagation geometry and the state of the receiving array. Satellite motion changes the physical line of sight. UAV rotation changes the coordinates in which that line of sight is observed. Receive weights and electronic references introduce a further dependence: two combinations of the same array snapshot may have different complex responses. These changes enter one recorded signal, although they have different causes and require different operations.

Rate-splitting multiple access separates a common stream from user-specific private streams and allows interference to be partly decoded and partly treated as noise \cite{mao}. The resulting waveform can also support sensing. Satellite RSMA studies have considered communication--sensing trade-offs and fair resource allocation \cite{yin,park}. The receiver-side formulation of Xu et al. \cite{source} uses decoded common and intended-private symbols as known components for angle-of-arrival estimation, while the spatial covariance of the remaining private streams supplies additional angular information. The same formulation relates the estimation bound to array gain and minimum effective user rate.

A question remains at the receiver: given its information and the available hardware, which observation state should it choose? A direction estimate alone does not answer this question. Steering towards an estimated angle may maximize gain for that estimate but need not maximize average rate under uncertainty. Likewise, extrapolating a fitted velocity may remove a deterministic delay while amplifying measurement noise. A coherent receiver must also preserve the phase convention when its weights change.

The physical--observation dual-axis (PODA) structure of Physobser Theory represents the physical state and the conditions that form an observation as separate arguments of the response law~\cite{meng_dual_axis}. Here the physical state contains the source, line of sight and transmitted symbols; the observation state contains the array frame, receive weights, calibration and phase reference. The associated phase-transport theory describes phase along joint-state paths; its fixed-physical-state restriction compares coherent responses to the same source~\cite{meng_phase_transport}. Receive optimization selects among these observation states using the information available to the receiver.

This paper develops that formulation for the single-satellite line-of-sight model. The common and private streams share one receive-array response at each UAV, so the known-response problem reduces to a generalized Rayleigh quotient. Directional uncertainty leads to a conditional covariance eigenproblem for mean gain. The corresponding mean-rate problem is more demanding because the logarithmic decoding rates and the common-stream bottleneck act on each realization before averaging. We establish the concavity of the fair-rate water level and use it to construct an upper bound over all rank-one digital receive beams.

Attitude-assisted angular prediction is established in UAV communications \cite{wang}. The present analysis uses this geometry while distinguishing observation-frame transport from physical prediction. Its contribution lies in their combination with conditional receive optimization, coherent phase transport, and explicit bounds on the achievable rate. The numerical study includes a paired comparison with the preceding fixed-window implementation and a separate experiment that optimizes the rate objective itself. All mathematical results are stated for their corresponding information set, loss and hardware constraint.

\section{Signal Model and Admissible Observation States}
\subsection{Physical and observation variables}
Following the PODA formulation in~\cite{meng_dual_axis}, let $X$ and $S$ be the physical and observation state spaces and $B\subseteq X\times S$ their realizable relation. The response law is
\begin{equation}
F:B\longrightarrow Y.
\end{equation}
The physical state $x\in X$ specifies the source and propagation variables. The observation state $s\in S$ specifies how they form a record. Fixing $s$ gives an ordinary physical signal model; fixing $x$ permits the comparison of observation states on the same source.

For UAV $k$, let $\ell_k\in S^2$ be the incoming unit direction in an inertial frame and $R_k\in\mathrm{SO}(3)$ the body-to-inertial rotation. The local direction cosines are
\begin{equation}
\eta_k=P R_k^T\ell_k,\qquad
P=\begin{bmatrix}1&0&0\\0&1&0\end{bmatrix}.
\label{eq:local}
\end{equation}
An $M_x\times M_y$ half-wavelength uniform planar array (UPA) has $N=M_xM_y$ elements. With centered element coordinates $r_m$, define
\begin{equation}
[a_r(\eta)]_m=e^{\ii\pi r_m^T\eta},\qquad
\bar a(\eta)=a_r(\eta)/\sqrt N.
\label{eq:steering}
\end{equation}
Thus $\|a_r\|^2=N$ and $\|\bar a\|=1$. The normalized vector will be used for receive gains.

The array signal is
\begin{equation}
y_k=a_k\left(b_{k,c}s_c+\sum_{j=1}^K b_{k,j}s_j\right)+n_k,
\qquad a_k=C_k a_r(\eta_k),
\label{eq:signal}
\end{equation}
where $C_k$ is the array calibration response, $b_{k,i}$ contains the fixed transmit beam and propagation coefficient, and $n_k\sim\CN(0,\mathcal N_k)$ with $\mathcal N_k\succ0$. The common stream $s_c$ and private streams $s_j$ have unit power and are mutually independent. A digital combiner produces $z_k=w_k^Hy_k$.

The receiver selects $w_k$ and its coherent reference. Actual UAV attitude is treated as a measured formation condition rather than a freely assignable control. All transmit coefficients and powers remain fixed when receive strategies are compared. Array-domain samples are available to the digital receiver. An equal-amplitude phase-shifter implementation is considered separately.

Let $\mathcal D$ denote the receiver's time-stamped history, nominal noise levels and available attitude information. When direction is uncertain, a chosen beam is a function of $\mathcal D$, not of the unobserved direction realization. A conditional distribution describes that uncertainty. Same-source phase transport compares two observation states at a common physical state~\cite{meng_phase_transport}. Averaging over possible physical states preserves this same-source requirement.

\subsection{Decoding rates and common-rate sharing}
For ideal calibration $C_k=I$ and white noise, set $\mathcal N_k=\sigma_k^2I$ and
\begin{equation}
g_k=|w_k^H\bar a(\eta_k)|^2,\quad \|w_k\|=1,
\qquad q_{k,i}=\frac{N|b_{k,i}|^2}{\sigma_k^2}.
\end{equation}
The normalized gain satisfies $0\le g_k\le1$. Define
\begin{align}
f_{c,k}(g)&=\log_2\left(1+\frac{q_{k,c}g}{1+g\sum_jq_{k,j}}\right),\\
f_{p,k}(g)&=\log_2\left(1+\frac{q_{k,k}g}{1+g\sum_{j\ne k}q_{k,j}}\right).
\label{eq:rates}
\end{align}
Perfect common-stream cancellation is assumed before private decoding. The total common rate is $r_c(g)=\min_k f_{c,k}(g_k)$ and the private rate is $r_k=f_{p,k}(g_k)$. Optimal common-rate sharing gives the water level $T(g)$ determined by
\begin{equation}
\sum_k(T-r_k)_+=r_c,\qquad C_k=(T-r_k)_+.
\label{eq:water}
\end{equation}
When $r_c=0$, take $T=\min_kr_k$. This is the minimum user spectral efficiency for the given gain realization. The stochastic objective in this paper is $\E[T(g)\mid\mathcal D]$, with common-rate allocation adapted to each realization. It is an ideal physical-layer rate measure; a fixed code rate and delayed feedback would define a different objective.

\section{Exact Optima at a Known Physical State}
\subsection{Noise-weighted receive matching}
All streams at UAV $k$ share $a_k$. Their decoding SINRs depend on $w_k$ through the single quantity
\begin{equation}
\kappa_k(w_k)=\frac{|w_k^Ha_k|^2}{w_k^H\mathcal N_kw_k}.
\end{equation}
For example,
\begin{equation}
\gamma_{c,k}=\frac{|b_{k,c}|^2\kappa_k}{1+\kappa_k\sum_j|b_{k,j}|^2}.
\end{equation}
The analogous private SINR is obtained by excluding the intended stream from the interference sum.

\begin{theorem}[Known-response observation optimum]
For a nonzero known response $a_k$ and $\mathcal N_k\succ0$, every nonzero combiner satisfies
\begin{equation}
\kappa_k(w_k)\le a_k^H\mathcal N_k^{-1}a_k.
\end{equation}
Equality holds exactly for
\begin{equation}
w_k^\star=c_k\mathcal N_k^{-1}a_k,\qquad c_k\ne0.
\label{eq:knownopt}
\end{equation}
The choices \eqref{eq:knownopt} simultaneously maximize all common and intended-private decoding SINRs, and hence maximize the fair rate for fixed transmit coefficients.
\end{theorem}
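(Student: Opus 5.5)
The argument is a whitening step followed by the Cauchy--Schwarz inequality, plus a monotonicity observation that carries the single bound over to all the rates. Since $\mathcal N_k\succ0$, it has a Hermitian positive-definite square root $\mathcal N_k^{1/2}$ with inverse $\mathcal N_k^{-1/2}$. For a nonzero $w_k$, set $u=\mathcal N_k^{1/2}w_k$, which is also nonzero, and $v=\mathcal N_k^{-1/2}a_k$. Then
\begin{equation*}
w_k^Ha_k=u^Hv,\qquad w_k^H\mathcal N_kw_k=\|u\|^2,
\end{equation*}
so that
\begin{equation*}
\kappa_k(w_k)=\frac{|u^Hv|^2}{\|u\|^2}\le\|v\|^2=a_k^H\mathcal N_k^{-1}a_k .
\end{equation*}
Equality in Cauchy--Schwarz holds exactly when $u=c_kv$ for some $c_k\ne0$; the case $v=0$ cannot occur because $a_k\ne0$. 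Mapping back through $\mathcal N_k^{-1/2}$ gives $w_k=c_k\mathcal N_k^{-1}a_k$. This settles the bound and the characterization of the equality set in \eqref{eq:knownopt}.

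Next I would show that every decoding SINR at UAV $k$ is a nondecreasing function of $\kappa_k$ alone. The common SINR has the form $\gamma_{c,k}=\alpha\kappa/(1+\beta\kappa)$ with $\alpha=|b_{k,c}|^2\ge0$ and $\beta=\sum_j|b_{k,j}|^2\ge0$. Its derivative is $\alpha/(1+\beta\kappa)^2\ge0$, so it is nondecreasing on $\kappa\ge0$. The intended-private SINR has the same form, with $\alpha=|b_{k,k}|^2$ and $\beta=\sum_{j\ne k}|b_{k,j}|^2$; this uses the stated assumption of perfect common cancellation. Hence maximizing $\kappa_k$ maximizes $\gamma_{c,k}$ and $\gamma_{p,k}$ at the same time, and \eqref{eq:knownopt} is an optimizer for all of them. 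Because the expressions depend on $w_k$ only through $\kappa_k$, the choice of $c_k$ is irrelevant.

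The fair-rate claim is where I expect the main care to be needed: I have to show that the water level is monotone in the per-user rates. Let $r_c=\min_k\log_2(1+\gamma_{c,k})$ and $r_k=\log_2(1+\gamma_{p,k})$. I would define
\begin{equation*}
\Phi(T)=\sum_k(T-r_k)_+ ,
\end{equation*}
which is continuous and nondecreasing in $T$, and strictly increasing for $T>\min_kr_k$. The water level $T$ is then the unique solution $T\ge\min_kr_k$ of $\Phi(T)=r_c$, with $T=\min_kr_k$ when $r_c=0$. Raising any $r_k$ lowers $\Phi$ pointwise, and raising $r_c$ raises the target. In either case the crossing point can only move upward, so $T$ is nondecreasing in every $r_k$ and in $r_c$. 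Since $\min$ and $\log_2(1+\cdot)$ are monotone, $T$ is nondecreasing in each $\kappa_k$. The UAVs' combiners also decouple, because $\kappa_k$ depends only on $w_k$. Choosing \eqref{eq:knownopt} at every UAV therefore maximizes $T$, that is, the minimum-rate fair objective, for the fixed transmit coefficients.
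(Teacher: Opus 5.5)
Your proof is correct and follows the paper's argument: whitening by $\mathcal N_k^{\pm1/2}$ plus Cauchy--Schwarz gives the bound and the equality set, and monotonicity of every decoding SINR in $\kappa_k$ carries this through to the water level. Your crossing argument for $\Phi(T)=\sum_k(T-r_k)_+$ simply spells out the step the paper compresses into the remark that every feasible common/private rate pair remains feasible when a user's gain increases.
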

\begin{proof}
Cauchy--Schwarz gives
\[
|w^Ha|^2=|(\mathcal N^{1/2}w)^H\mathcal N^{-1/2}a|^2
\le(w^H\mathcal Nw)(a^H\mathcal N^{-1}a).
\]
Equality requires the two transformed vectors to be proportional. Each SINR is nondecreasing in $\kappa$, and every feasible common/private rate pair remains feasible when a user's gain increases. The optimum is therefore simultaneous across decoding stages and users.
\end{proof}
For white noise, ideal calibration $C_k=I$ and unit-norm weights, \eqref{eq:knownopt} becomes $w^\star=e^{\ii\phi}\bar a$. The ideal-alignment result is consequently a global upper limit for this physical model, rather than an additional receiver algorithm with available direction information.

\subsection{Phase transport selects the coherent representative}
The gain optimum leaves a scalar phase undetermined. A coherent choice of that phase follows from the response itself. Let $h(x,s)=w^Ha$ be smooth on the admissible joint-state domain and write $B^\times=\{(x,s)\in B:h(x,s)\ne0\}$. The normalized response and its phase one-form are
\begin{equation}
u=\frac{h}{|h|},\qquad
A_h=-\ii u^{-1}du
=\operatorname{Im}\frac{dh}{h}
=\frac{\operatorname{Im}(h^*dh)}{|h|^2}.
\label{eq:jointphaseform}
\end{equation}
Here $d$ differentiates on the joint domain. The denominator removes amplitude dependence. The form $A_h$ lives on $B^\times$; the corresponding principal connection lives on $P_{\rm ph}=B^\times\times U(1)$. With $\pi_{\rm ph}$ and $\mathrm{pr}_{U(1)}$ the two projections, and $\omega_{U(1)}=-\ii\zeta^{-1}d\zeta$, that connection is
\begin{equation}
\Omega_h=\mathrm{pr}_{U(1)}^*\omega_{U(1)}-\pi_{\rm ph}^*A_h.
\label{eq:principalconnection}
\end{equation}
For a piecewise-smooth joint-state path $c=(x,s)$ and its lift $\widetilde c=(c,\zeta)$, the intrinsic basic phase-transport equation (PTFE) is~\cite[Secs.~5.3--6.2]{meng_phase_transport}
\begin{equation}
\mathfrak E_h[\widetilde c]=\widetilde c^{\,*}\Omega_h=0.
\label{eq:intrinsicptfe}
\end{equation}
In a phase coordinate this is $\dot\zeta-\ii A_h(\dot c)\zeta=0$. The joint path is supplied by the physical motion and receiver operation; PTFE determines the phase carried along that path.

Now hold $x$ fixed and let $s(\lambda)$ remain in a nonzero response slice. The restriction of \eqref{eq:intrinsicptfe} to this path is
\begin{equation}
A_S=-\ii u^{-1}d_Su=\frac{\operatorname{Im}(h^*d_Sh)}{|h|^2},
\qquad \frac{d\tau}{d\lambda}=\ii A_S\!\left(\frac{ds}{d\lambda}\right)\tau,
\qquad \tau(0)=1.
\label{eq:ptfecont}
\end{equation}
The derivative $d_S$ varies only the observation state. For old and new responses $h_a=w_a^Ha$ and $h_b=w_b^Ha$, the finite solution is~\cite[Secs.~6.4--6.5]{meng_phase_transport}
\begin{equation}
\tau_{b\leftarrow a}(x)=\exp\!\left(\ii\int_{s_a}^{s_b}A_S\right)
=u_bu_a^{-1}=\frac{h_bh_a^*}{|h_b||h_a|}.
\label{eq:ptfe}
\end{equation}
This continuous transport requires a connecting path within the same nonzero slice. If the endpoints lie in different path components, the normalized ratio still converts their phases algebraically, but no such path realizes it. Since $u$ is globally defined on $B^\times$, $dA_h=0$: the induced connection is flat and circular transport around a closed path is unity, even when the unwrapped phase accumulates $2\pi n$. A response zero terminates this phase description and requires a new nonzero reference. These properties concern the scalar response-induced connection, rather than an arbitrary principal connection.

Writing $\ph(z)=z/|z|$ for a nonzero complex number, the finite solution satisfies $\ph(h_b)=\tau_{b\leftarrow a}\ph(h_a)$. Multiplying the new output by $\tau^{-1}$ gives
\begin{equation}
\tau^{-1}h_bm=|h_b|\ph(h_a)m.
\end{equation}
Equivalently, the coherent receive weight is
\begin{equation}
w_b^{\rm coh}=\tau_{b\leftarrow a}w_b.
\label{eq:weightphase}
\end{equation}
The weight multiplier and output multiplier have opposite signs because combining uses $w^H$. They are equivalent implementations of one reference change.

For a changing physical state, the comparison is made anew at each epoch. With a fixed reference state $s_r$, define~\cite[Sec.~6.6]{meng_phase_transport}
\begin{equation}
T_r(t)=u(x(t),s_r)u(x(t),s(t))^{-1}.
\end{equation}
Whenever both responses are nonzero, a noiseless received waveform $z(t)=h(x(t),s(t))m(t)$ obeys
\begin{equation}
T_r(t)z(t)=|h(x(t),s(t))|u(x(t),s_r)m(t).
\label{eq:dynamicreference}
\end{equation}
The source waveform and current amplitude are retained, while the array-response phase is expressed at the chosen reference. A continuous real phase $\phi_r$ of $h(x(t),s_r)$ on a smooth fixed-reference slice satisfies
\begin{equation}
\dot\phi_r(t)=\operatorname{Im}
\frac{D_xh(x(t),s_r)[\dot x(t)]}{h(x(t),s_r)}.
\label{eq:referencephase}
\end{equation}
On a smooth interval where $m(t)\ne0$, choose a continuous phase $\phi_m$. The corrected waveform then has phase rate $\dot\phi_r+\dot\phi_m$. The carrier phase of $m(t)$ is therefore recovered up to a constant when $u(x(t),s_r)$ is constant along the physical trajectory; a known varying reference phase can instead be calibrated out. More generally, identification of a transported observation with an intrinsic physical phase requires phase faithfulness or calibration of the reference observation. Equation~\eqref{eq:dynamicreference} itself establishes conversion to the reference state.

If $\tau$ is uncertain, its conditional circular mean supplies an exact estimation result. For a unit-modulus decision $v$, the conditional phase loss is
\begin{equation}
\E[|v^*\tau-1|^2\mid\mathcal D]
=2-2\operatorname{Re}\{v^*m_\tau\},\qquad
m_\tau=\E[\tau\mid\mathcal D].
\end{equation}
It is minimized by $v^\star=m_\tau/|m_\tau|$ when $m_\tau\ne0$; every phase ties when $m_\tau=0$. The inverse correction is $(v^\star)^*$. This result concerns circular chordal loss. 

For two combinations of the same raw snapshots, the correlation estimate can be made consistent by subtracting their common noise contribution:
\begin{equation}
\widehat c_{ba}=\frac1L\sum_{\ell=1}^Lz_b[\ell]z_a[\ell]^*
-w_b^H\mathcal Nw_a,\qquad \widehat\tau=\ph(\widehat c_{ba}).
\end{equation}
Here the two weights act at the same physical epoch. A ratio across different physical epochs also contains physical motion and symbol evolution. Phase transport fixes a coherent representative of a gain-optimal beam; it does not increase the power SINR by itself.

\section{Conditional Mean-Gain Optimization}
\subsection{The posterior response covariance}
For white noise, define the normalized conditional response covariance
\begin{equation}
Q_k=\E[\bar a(\eta_k)\bar a(\eta_k)^H\mid\mathcal D],
\qquad Q_k\succeq0,\quad\tr Q_k=1.
\end{equation}
The mean receive gain is $w_k^HQ_kw_k$.

\begin{theorem}[Conditional mean-gain optimum]
Among all unit-norm digital receive weights,
\begin{equation}
w_{G,k}^\star=v_{\max}(Q_k),\qquad
\overline g_k^\star=\lambda_{\max}(Q_k).
\label{eq:gainopt}
\end{equation}
If the largest eigenvalue is simple, the solution is unique up to its coherent phase.
\end{theorem}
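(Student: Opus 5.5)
The plan is to reduce the statement to the Rayleigh--Ritz characterization of the largest eigenvalue of a Hermitian positive-semidefinite matrix.

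First I will check that the objective is exactly a Hermitian quadratic form. The weight $w_k$ is a function of $\mathcal D$ only, so linearity of conditional expectation gives
\[
\E\bigl[|w_k^H\bar a(\eta_k)|^2\mid\mathcal D\bigr]=w_k^H\,\E[\bar a\bar a^H\mid\mathcal D]\,w_k=w_k^HQ_kw_k .
\]
The expectation exists because $\|\bar a\|=1$, so every entry of $\bar a\bar a^H$ is bounded by one. The matrix $Q_k$ is Hermitian and positive semidefinite, with $\tr Q_k=\E\|\bar a\|^2=1$.

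Next I will apply the spectral theorem, $Q_k=\sum_i\lambda_iv_iv_i^H$, with $\lambda_1\ge\dots\ge\lambda_N\ge0$ and orthonormal $v_i$. Writing $\alpha_i=v_i^Hw_k$, the unit-norm constraint becomes $\sum_i|\alpha_i|^2=1$. Then
\[
w_k^HQ_kw_k=\sum_i\lambda_i|\alpha_i|^2\le\lambda_1 ,
\]
and the bound is attained at $w_k=v_1$. This proves \eqref{eq:gainopt}.

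For uniqueness, equality requires $\sum_i(\lambda_1-\lambda_i)|\alpha_i|^2=0$. Every term in this sum is nonnegative, so $\alpha_i=0$ whenever $\lambda_i<\lambda_1$, and $w_k$ must lie in the top eigenspace. If $\lambda_1$ is simple, that eigenspace is the line spanned by $v_1$, so the optimal unit vectors are exactly $e^{\ii\phi}v_1$. The remaining phase $\phi$ leaves the gain unchanged. It is fixed by the coherent transport construction \eqref{eq:weightphase}, exactly as in the known-response case.

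No step is a real obstacle. The only point needing care is the first one: the interchange of expectation and quadratic form is valid precisely because the beam is $\mathcal D$-measurable, as stipulated in the signal model. I would also note that when $\lambda_1$ has multiplicity greater than one, every unit vector in its eigenspace is optimal, so uniqueness up to phase fails in that case.
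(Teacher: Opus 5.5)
Your proposal is correct and follows the paper's own argument. You diagonalize $Q_k$, bound $w^HQ_kw=\sum_i\lambda_i|(V^Hw)_i|^2\le\lambda_{\max}$ for unit $w$, and note that equality forces $w$ into the leading eigenspace, which is a single line when $\lambda_{\max}$ is simple; your added checks (the identity $\E[g_k\mid\mathcal D]=w_k^HQ_kw_k$ via $\mathcal D$-measurability of $w_k$, and the explicit equality analysis through $\sum_i(\lambda_1-\lambda_i)|\alpha_i|^2=0$) only make explicit what the paper states just before and within its one-line proof.
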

\begin{proof}
Writing $Q=V\Lambda V^H$, any unit vector $w$ satisfies
$w^HQw=\sum_i\lambda_i|(V^Hw)_i|^2\le\lambda_{\max}$.
Equality holds precisely in the leading eigenspace.
\end{proof}
For colored noise, apply the same argument to
$\mathcal N_k^{-1/2}\E[a_ka_k^H\mid\mathcal D]\mathcal N_k^{-1/2}$ and transform its leading eigenvector back by $\mathcal N_k^{-1/2}$.

If $\eta\mid\mathcal D\sim\mathcal N(\mu,\Sigma)$, the characteristic function gives
\begin{equation}
Q_{mn}=\frac1N\exp\left\{\ii\pi(r_m-r_n)^T\mu
-\frac{\pi^2}{2}(r_m-r_n)^T\Sigma(r_m-r_n)\right\}.
\label{eq:gaussQ}
\end{equation}
Write $Q=N^{-1}D_\mu B_\Sigma D_\mu^H$, where $[D_\mu]_{mm}=e^{\ii\pi r_m^T\mu}$. Every entry of $B_\Sigma$ is strictly positive. Its leading eigenvector can therefore be chosen strictly positive and is unique after normalization. The optimal beam has the form
\begin{equation}
w_G^\star=D_\mu v_\Sigma.
\end{equation}
Its phases point towards the conditional mean, while its amplitudes depend on directional uncertainty. At $\Sigma=0$, $v_\Sigma$ is uniform and the solution reduces to matched steering. For independent direction-cosine errors on a rectangular UPA, $B_\Sigma=B_x\otimes B_y$, so two small eigenproblems replace one full-array decomposition.

Equation \eqref{eq:gaussQ} is exact for the stated Gaussian model. A posterior restricted to the visible direction-cosine disk, or one with several modes, is handled by the definition of $Q$ rather than by changing the eigenvalue argument.

\subsection{Equal-amplitude hardware}
\begin{theorem}[Phase-only mean-gain optimum]
Under the Gaussian model \eqref{eq:gaussQ} and the constraint $|w_m|=1/\sqrt N$, the global mean-gain optimum is
\begin{equation}
w_{\rm ph}^\star=e^{\ii\phi}\bar a(\mu).
\end{equation}
\end{theorem}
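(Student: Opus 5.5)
The plan is to use the factorization $Q=N^{-1}D_\mu B_\Sigma D_\mu^H$ from \eqref{eq:gaussQ} and the strict positivity of the entries of $B_\Sigma$. Any equal-amplitude weight can be written as $w=D_\mu v$ with $|v_m|=1/\sqrt N$, because $D_\mu$ is diagonal and unitary. It therefore preserves the constraint set. The mean gain becomes
\[
w^HQw=\frac1N v^H B_\Sigma v=\frac1N\sum_{m,n}[B_\Sigma]_{mn}\,v_m^*v_n .
\]
The problem thus reduces to maximizing a Hermitian form with strictly positive real entries over the torus $|v_m|=1/\sqrt N$.

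Next I bound the form termwise. Since $[B_\Sigma]_{mn}=\exp\{-\tfrac{\pi^2}{2}(r_m-r_n)^T\Sigma(r_m-r_n)\}>0$ is real and positive,
\[
\operatorname{Re}\sum_{m,n}[B_\Sigma]_{mn}v_m^*v_n\le\sum_{m,n}[B_\Sigma]_{mn}|v_m||v_n|=\frac1N\sum_{m,n}[B_\Sigma]_{mn}.
\]
The left side is already real because the form is Hermitian. This is the triangle inequality, and it is attained by $v=\mathbf 1/\sqrt N$. Translating back gives $w=D_\mu\mathbf 1/\sqrt N=\bar a(\mu)$. The optimal value is $N^{-2}\mathbf 1^TB_\Sigma\mathbf 1$. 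Multiplying by a global phase $e^{\ii\phi}$ does not change the gain, which produces the stated family $e^{\ii\phi}\bar a(\mu)$.

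If uniqueness up to phase is also wanted, I would examine the equality case. Equality requires $v_m^*v_n$ to be real and positive for every pair $(m,n)$, since every weight $[B_\Sigma]_{mn}$ is strictly positive. Hence all $v_m$ share a single phase. This matches the Perron-type uniqueness used after Theorem~\ref{eq:gainopt} for the unconstrained case.

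I do not expect a real obstacle; the main point needing care is the reduction step. One must check that $D_\mu$ maps the equal-amplitude set onto itself. One must also check that the entries of $B_\Sigma$ are real and positive, not merely of positive modulus. Both follow directly from \eqref{eq:gaussQ}: the $\mu$-dependence is confined to the phase factors absorbed into $D_\mu$, and the Gaussian term is real and positive.
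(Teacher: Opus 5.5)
Your proposal is correct and follows essentially the same argument as the paper. The paper writes $w=N^{-1/2}D_\mu(e^{\ii\theta_1},\ldots,e^{\ii\theta_N})^T$, reduces the mean gain to $N^{-2}\sum_{m,n}[B_\Sigma]_{mn}\cos(\theta_n-\theta_m)$, and bounds each cosine by one; this is your triangle-inequality step written in phase coordinates. Your equality analysis goes slightly further: it shows the common-phase condition is also necessary, so the optimum is unique up to the global factor $e^{\ii\phi}$, whereas the paper states only that the condition is sufficient.
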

\begin{proof}
Write $w=N^{-1/2}D_\mu(e^{\ii\theta_1},\ldots,e^{\ii\theta_N})^T$. Then
\[
w^HQw=\frac1{N^2}\sum_{m,n}[B_\Sigma]_{mn}\cos(\theta_n-\theta_m)
\le\frac1{N^2}\sum_{m,n}[B_\Sigma]_{mn}.
\]
All coefficients are positive, and equality holds when every residual phase is the same.
\end{proof}
Thus additional amplitude freedom, not a different phase-only steering rule, creates the mean-gain improvement in this posterior model. The rate-optimal design below uses full digital amplitude and phase weights.

\section{Direct Optimization of Mean Fair Rate}
\subsection{Gain optimality and rate optimality}
Mean gain and mean rate are different objectives. For a single user with two orthogonal unit responses occurring with probabilities $0.6$ and $0.4$, the mean-gain beam assigns all energy to the first response. At total SNR $10$, its mean rate is $0.6\log_2(11)$. A beam allocating energy fractions $x$ and $1-x$ obtains
\begin{equation}
0.6\log_2(1+10x)+0.4\log_2(1+10(1-x)),
\end{equation}
whose maximum occurs at $x=0.62$. Distributing gain across uncertain responses can therefore improve the logarithmic rate even when it reduces mean gain.

For the multiuser problem, the following property makes the distinction tractable.
\begin{theorem}[Concavity of the RSMA fair-rate function]
For fixed nonnegative stream powers, $T(g)$ in \eqref{eq:water} is jointly concave and nondecreasing in each component of $g\in\R_+^K$. It admits the representation
\begin{equation}
T(g)=\min_{\varnothing\ne A\subseteq\{1,\ldots,K\}}
\frac{\min_j f_{c,j}(g_j)+\sum_{k\in A}f_{p,k}(g_k)}{|A|}.
\label{eq:subset}
\end{equation}
\end{theorem}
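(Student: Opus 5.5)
I would prove the subset representation \eqref{eq:subset} first and obtain concavity and monotonicity from it. I would then verify that the individual rate functions are concave and nondecreasing in the gain, and that these properties survive minima, sums, and positive scaling.

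\emph{Step 1 (subset formula).} Write $r_c=\min_j f_{c,j}(g_j)$ and $r_k=f_{p,k}(g_k)$. The left side of \eqref{eq:water}, $\phi(T)=\sum_k(T-r_k)_+$, is continuous and nondecreasing. It vanishes for $T\le\min_k r_k$ and is strictly increasing thereafter. Hence $T(g)$ is the unique solution with $T\ge\min_k r_k$, and this also covers the convention for $r_c=0$. For a nonempty $A$ define $T_A=(r_c+\sum_{k\in A}r_k)/|A|$, which solves $\sum_{k\in A}(T-r_k)=r_c$.

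For every $A$ we have $\sum_{k\in A}(T-r_k)\le\phi(T)=r_c$, so $|A|T\le r_c+\sum_{A}r_k$ and therefore $T\le T_A$. Now take $A^\star=\{k:r_k<T\}$. If it is nonempty (it is when $r_c>0$), then equality holds and $T=T_{A^\star}$. If $r_c=0$, take $A=\{\arg\min r_k\}$, which gives $T_A=\min_k r_k=T$. This proves \eqref{eq:subset}: $T$ is the minimum of the $T_A$, and the minimum is attained.

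\emph{Step 2 (scalar concavity).} For $q,p\ge0$, set
\[
h(g)=\log_2\!\Bigl(1+\frac{qg}{1+pg}\Bigr)=\log_2(1+(p+q)g)-\log_2(1+pg).
\]
Then
\[
h'(g)=\frac{1}{\ln 2}\Bigl[\frac{p+q}{1+(p+q)g}-\frac{p}{1+pg}\Bigr]=\frac{q}{\ln2\,(1+(p+q)g)(1+pg)}\ge0.
\]
This derivative is a positive constant divided by a product of two positive increasing affine functions, so it is nonincreasing on $g\ge0$. Hence $h$ is concave and nondecreasing on $\R_+$.

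This applies to $f_{c,k}$ with $q=q_{k,c}$ and $p=\sum_j q_{k,j}$, and to $f_{p,k}$ with $q=q_{k,k}$ and $p=\sum_{j\ne k}q_{k,j}$. The constants $q_{k,i}\ge0$ because the stream powers are fixed and nonnegative.

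\emph{Step 3 (assembly).} Each $f_{c,j}(g_j)$ and $f_{p,k}(g_k)$, viewed as a function of the vector $g$, is concave and nondecreasing in each component, since it depends on one coordinate only. The minimum over $j$ preserves both properties. Adding $\sum_{k\in A}f_{p,k}$ and dividing by $|A|>0$ preserves them. The final minimum over the finitely many $A$ preserves them again. So $T$ is jointly concave and componentwise nondecreasing on $\R_+^K$.

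The only delicate point is Step 1. One must confirm that the water-level definition picks the root with $T\ge\min_k r_k$, so that the $r_c=0$ convention matches. One must also confirm that the active set $A^\star$ is nonempty whenever $r_c>0$. This holds because $\phi(T)>0$ forces some $r_k<T$. Everything else is standard convex calculus.
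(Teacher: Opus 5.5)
Your proof is correct and follows the paper's structure. You show each scalar rate is concave and nondecreasing, establish the subset representation, and note that sums, positive scaling and pointwise minima preserve both properties. The one difference is how you prove the subset formula. The paper sorts the private rates and minimizes over prefix sets. You instead show $T\le T_A$ for every nonempty $A$, with equality at the active set $\{k:r_k<T\}$ (or at a minimizing singleton when $r_c=0$), which is more self-contained and handles the $r_c=0$ convention explicitly.
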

\begin{proof}
For a desired power $S$ and interference power $I$, each rate is
$f(g)=\log_2[(1+(I+S)g)/(1+Ig)]$ and satisfies
\begin{align}
f'(g)&=\frac{S}{\ln2(1+Ig)[1+(I+S)g]}\ge0,\\
f''(g)&=-\frac{S[(2I+S)+2I(I+S)g]}{\ln2(1+Ig)^2[1+(I+S)g]^2}\le0.
\end{align}
Sort the private rates as $r_{(1)}\le\cdots\le r_{(K)}$. Solving the water-level equation yields
$T=\min_{1\le m\le K}(r_c+\sum_{i=1}^m r_{(i)})/m$.
For any fixed cardinality, the smallest subset sum selects those lowest rates, proving \eqref{eq:subset}. Each branch of \eqref{eq:subset} is concave and coordinatewise nondecreasing. Their pointwise minimum retains both properties.
\end{proof}

\begin{corollary}[A global conditional rate bound]
For any receive weights chosen from $\mathcal D$,
\begin{equation}
\E[T(g)\mid\mathcal D]
\le T(\E[g\mid\mathcal D])
\le T(\lambda_{\max}(Q_1),\ldots,\lambda_{\max}(Q_K)).
\label{eq:jensen}
\end{equation}
\end{corollary}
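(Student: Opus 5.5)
The corollary has two inequalities, and I would prove them separately. The first follows from the concavity theorem via conditional Jensen. The second follows from the monotonicity in the same theorem together with the conditional mean-gain optimum.

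\textbf{Step 1: conditional Jensen.} The weights $w_k$ are functions of $\mathcal D$, so conditionally on $\mathcal D$ they are fixed unit vectors. Only the direction realizations $\eta_k$ remain random. Hence $g=(g_1,\ldots,g_K)$ is a random vector supported in $[0,1]^K\subseteq\R_+^K$, and it is bounded, so every conditional expectation below exists.

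By the concavity theorem, $T$ is concave on $\R_+^K$. It is finite and continuous there, being the pointwise minimum of finitely many continuous concave branches in \eqref{eq:subset}. Jensen's inequality for a concave function of a bounded random vector, applied to the regular conditional distribution given $\mathcal D$, then gives $\E[T(g)\mid\mathcal D]\le T(\E[g\mid\mathcal D])$.

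If one prefers to avoid measure-theoretic Jensen, there is a direct route. Each branch of \eqref{eq:subset} is a minimum of concave $C^2$ scalar rates, so it lies below its supergradient affine majorants. Taking conditional expectations of those affine bounds gives the same inequality.

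\textbf{Step 2: bound each mean gain by an eigenvalue.} Since $w_k$ is $\mathcal D$-measurable,
\[
\E[g_k\mid\mathcal D]=w_k^H\E[\bar a(\eta_k)\bar a(\eta_k)^H\mid\mathcal D]w_k=w_k^HQ_kw_k.
\]
Here $w_k$ is pulled out of the conditional expectation because it is known given $\mathcal D$. This "taking out what is known" step is where the hypothesis that the weights are chosen from $\mathcal D$ is essential. By the conditional mean-gain optimum \eqref{eq:gainopt}, $w_k^HQ_kw_k\le\lambda_{\max}(Q_k)$ for every unit-norm $w_k$.

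\textbf{Step 3: monotonicity.} The concavity theorem also states that $T$ is nondecreasing in each component on $\R_+^K$. Raising the coordinates one at a time from $w_k^HQ_kw_k$ to $\lambda_{\max}(Q_k)$ therefore cannot decrease $T$. This yields the second inequality in \eqref{eq:jensen}.

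\textbf{Main obstacle.} The only delicate point is Step 1: $T$ must be concave on a convex set containing the support of $g$, and the conditioning must be legitimate. Both hold here, since the concavity theorem gives concavity on all of $\R_+^K$ and $g$ is bounded in $[0,1]^K$. I would state the Jensen step through the supergradient argument to keep the proof self-contained.

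Because Steps 2 and 3 hold for arbitrary unit-norm $\mathcal D$-measurable weights, the resulting bound holds uniformly over all such receivers. This is what makes it a global bound.
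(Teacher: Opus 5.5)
Your proof is correct and follows the paper's argument. The first inequality is conditional Jensen applied to the concave water level $T$, and the second combines $\E[g_k\mid\mathcal D]=w_k^HQ_kw_k\le\lambda_{\max}(Q_k)$ with the coordinatewise monotonicity of $T$. Your added details (the $\mathcal D$-measurability of the weights, $g\in[0,1]^K$, and the supergradient route to Jensen) only make explicit what the paper's one-line justification leaves implicit.
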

The first inequality follows from concavity; the second follows from monotonicity and \eqref{eq:gainopt}. The bound is often tight near perfect alignment. Under broad uncertainty it can be loose because averaging conceals simultaneous weak-user events.

\subsection{Finite-scenario optimization}
Let $\eta_k^{(j)}$, $j=1,\ldots,J$, be a fixed set of joint directional samples from the conditional model. Define
\begin{equation}
A_k^{(j)}=\bar a(\eta_k^{(j)})\bar a(\eta_k^{(j)})^H,\qquad
W_k=w_kw_k^H.
\end{equation}
The sample-average objective is
\begin{equation}
F_J(W)=\frac1J\sum_{j=1}^JT(g^{(j)}),\qquad
 g_k^{(j)}=\tr(A_k^{(j)}W_k).
\label{eq:sampleF}
\end{equation}
A single digital receive beam requires $W_k\succeq0$, $\tr W_k=1$ and $\operatorname{rank}W_k=1$. Removing the rank constraint gives
\begin{equation}
\max_{W_k\succeq0,\,\tr W_k=1}F_J(W).
\label{eq:relax}
\end{equation}
By the preceding theorem, $F_J$ is concave in all $W_k$. Problem \eqref{eq:relax} is consequently a convex optimization problem in the usual maximization form \cite{boyd}. Its optimum bounds the rate of every admissible rank-one beam for the same samples. The relaxed matrix is used as an optimization variable, not as the response of a single combiner.

\subsection{A computable bound and a feasible beam}
At a feasible $W$, let $G_k$ be a block supergradient of $F_J$. Such a supergradient is obtained by selecting an active common-rate bottleneck and active private-rate subset in \eqref{eq:subset}. Set
\begin{equation}
\Delta(W)=\sum_k\left[\lambda_{\max}(G_k)-\tr(G_kW_k)\right].
\label{eq:gap}
\end{equation}
\begin{theorem}[Global upper bound from a supergradient]
Every feasible $W'$ in \eqref{eq:relax} satisfies
\begin{equation}
F_J(W')\le U(W):=F_J(W)+\Delta(W).
\label{eq:upper}
\end{equation}
For any feasible rank-one receive beams $\widehat w_k$ with achieved objective $L_J$, the unknown rank-one optimum lies in $[L_J,U(W)]$.
\end{theorem}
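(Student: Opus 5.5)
The plan is to combine the supergradient inequality for the concave function $F_J$ with the elementary fact that a linear functional over the spectraplex is maximized at the leading eigenvector. The argument has three steps.

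First I justify the supergradient. By the concavity theorem, each $T(g^{(j)})$ is a concave function of $g^{(j)}$. Each gain $g_k^{(j)}=\tr(A_k^{(j)}W_k)$ is linear in $W_k$. Hence $F_J$ is concave on the product of the spectraplexes $\{W_k\succeq0,\ \tr W_k=1\}$.

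To build a supergradient, I select at $W$ an active branch of \eqref{eq:subset} for each scenario $j$. This means an active set $A^{(j)}$ and an active common bottleneck index $i^{(j)}$. The chosen branch is a sum of concave differentiable scalar functions $f$ of the gains, divided by $|A^{(j)}|$. It lies everywhere above $T$, since $T$ is the minimum over branches, and it equals $T$ at $W$. Its gradient, pushed through the linear maps $g_k^{(j)}=\tr(A_k^{(j)}W_k)$, gives
\[
G_k=\frac1J\sum_j \frac{1}{|A^{(j)}|}\Big(\mathbf 1[k=i^{(j)}]\,f_{c,k}'(g_k^{(j)})+\mathbf 1[k\in A^{(j)}]\,f_{p,k}'(g_k^{(j)})\Big)A_k^{(j)},
\]
which is Hermitian and positive semidefinite. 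Concavity of each branch (tangent line above) and the minimum structure together give, for every feasible $W'$,
\[
F_J(W')\le F_J(W)+\sum_k\tr\big(G_k(W_k'-W_k)\big).
\]
The one point needing care is that the $f'$ are evaluated at $g\ge0$. This is fine because the $f$ are smooth on $[0,\infty)$, with the derivatives computed in the concavity proof.

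Second, I bound the linear term. For $W_k'\succeq0$ with $\tr W_k'=1$, write $W_k'=\sum_i p_i v_iv_i^H$ with $p_i\ge0$ and $\sum_i p_i=1$. Then $\tr(G_kW_k')\le\lambda_{\max}(G_k)$, by the same Rayleigh-quotient argument as in the proof of \eqref{eq:gainopt}. Summing over $k$ gives $F_J(W')\le F_J(W)+\Delta(W)=U(W)$.

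Third, I handle the rank-one bracket. Any feasible rank-one beam $\widehat w_k$ gives $W_k=\widehat w_k\widehat w_k^H$, which is feasible for \eqref{eq:relax}. So the rank-one optimum is at most the relaxed optimum, which is at most $U(W)$. The rank-one optimum is also at least the achieved value $L_J$ by definition. Hence it lies in $[L_J,U(W)]$.

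The main obstacle is the first step: making rigorous that the branch-selected gradient is a genuine supergradient of the nonsmooth minimum. I would handle it through the "min of concave functions dominated by any active branch" argument above, which avoids needing the full superdifferential calculus.
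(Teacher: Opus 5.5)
Your proposal is correct and follows the paper's proof. The three steps are the same: the concavity-based supergradient inequality with $G_k$ built from a selected common bottleneck and active private subset, as in Appendix~A; the bound $\tr(G_kW_k')\le\lambda_{\max}(G_k)$ for unit-trace positive-semidefinite matrices; and the fact that every rank-one beam lies in the relaxed feasible set. Your explicit argument that the active branch dominates $T$ spells out what the paper states only briefly in the appendix.
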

\begin{proof}
Concavity gives
$F_J(W')\le F_J(W)+\sum_k\tr[G_k(W_k'-W_k)]$.
A positive-semidefinite unit-trace matrix obeys $\tr(G_kW_k')\le\lambda_{\max}(G_k)$. Substitution proves the bound. Every rank-one candidate belongs to the relaxed feasible set, so its objective is a lower bound on the original rank-one optimum.
\end{proof}
This bound is available at any iterate. It does not depend on a stopping rule being interpreted as convergence. If $U-L_J\le\varepsilon$, the recovered beam is within $\varepsilon$ of the global optimum of the fixed finite-scenario problem.

The implementation uses a conditional-gradient iteration. For each user it finds a leading unit eigenvector $v_k$ of $G_k$ and takes the direction $v_kv_k^H-W_k$. A scalar line search maximizes $F_J$ along the feasible segment. The smallest upper bound encountered is retained. A rank-one beam is then extracted from the leading eigenvector of each $W_k$ and refined on the product of complex unit spheres. All reported achievable rates are evaluated using those vectors. The upper bound remains the one obtained from \eqref{eq:upper}, so the rate difference includes any loss caused by extracting a rank-one solution.

Independent samples evaluate performance beyond the design set. The bound \eqref{eq:upper} pertains to the fixed design objective; the separate evaluation estimates the expected performance of the resulting receiver under the stated distribution.

\section{Optimal Linear Prediction after Frame Transport}
\subsection{Transport and physical evolution}
For array directions, the same-source comparison of~\cite{meng_dual_axis} holds the physical line of sight fixed. Substituting $q_a=R_a^T\ell$ and $q_b=R_b^T\ell$ gives
\begin{equation}
q_b=R_b^TR_aq_a.
\label{eq:frametrans}
\end{equation}
Writing $R^T\dot R=[\omega]_\times$, the full direction evolution is
\begin{equation}
\dot q=R^T\dot\ell-[\omega]_\times q.
\end{equation}
The second term is observation-frame transport. The first term is physical motion. With a known front hemisphere,
$\iota(u,v)=(u,v,\sqrt{1-u^2-v^2})^T$ maps the measured direction cosines to a unit vector. Historical measurements are first expressed in the inertial frame:
\begin{equation}
y_i^{\ell}=\widehat R_i\iota(\widehat\eta_i).
\end{equation}
Prediction is then performed on their local Cartesian coordinates. The result is normalized and mapped into the array frame at the application epoch.

\subsection{Precision-weighted shrinkage}
Consider one local coordinate over a short history window:
\begin{equation}
y_i=\alpha+v t_i+\epsilon_i,\qquad
\E[v]=0,\quad\operatorname{Var}(v)=q,\quad
\E[\epsilon_i]=0,\quad\operatorname{Var}(\epsilon_i)=s_i^2.
\label{eq:predmodel}
\end{equation}
Assume $s_i^2>0$ and at least two distinct observation times, so $S_{tt}>0$. The level $\alpha$ is an unknown deterministic constant. The velocity and measurement errors are mutually uncorrelated; different measurement errors are uncorrelated. Define
\begin{align}
p_i&=s_i^{-2},&S_0&=\sum_i p_i,&\bar t&=\frac{\sum_i p_it_i}{S_0},\\
\bar y&=\frac{\sum_i p_iy_i}{S_0},&
S_{tt}&=\sum_i p_i(t_i-\bar t)^2,&
\widehat v&=\frac{\sum_i p_i(t_i-\bar t)y_i}{S_{tt}}.
\end{align}
Time is measured relative to the latest observation, so the target epoch has coordinate $\tau\ge0$.

\begin{theorem}[Best linear unbiased prediction]
Under \eqref{eq:predmodel}, the best linear unbiased predictor of $\alpha+v\tau$ is
\begin{equation}
\widehat\xi_\tau=\bar y+(\tau-\bar t)\kappa\widehat v,\qquad
\kappa=\frac{qS_{tt}}{1+qS_{tt}}.
\label{eq:blup}
\end{equation}
Its mean-square prediction error is
\begin{equation}
V_\tau=\frac1{S_0}+(\tau-\bar t)^2\frac{q}{1+qS_{tt}}.
\label{eq:blupvar}
\end{equation}
\end{theorem}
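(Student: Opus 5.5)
I will treat this as a standard mixed-model BLUP problem: $\alpha$ is a fixed effect and $v$ is a random effect with prior mean zero and variance $q$. A linear predictor has the form $\widehat\xi=\sum_i c_iy_i$. Since $\E[v]=0$, its error has mean $\alpha(\sum_ic_i-1)$. Unbiasedness for every deterministic $\alpha$ is therefore exactly the constraint $\sum_ic_i=1$. Under this constraint,
\[
\widehat\xi-(\alpha+v\tau)=v\Big(\sum_ic_it_i-\tau\Big)+\sum_ic_i\epsilon_i .
\]
The uncorrelatedness assumptions then give the mean-square error
\[
\mathrm{MSE}(c)=q\Big(\sum_ic_it_i-\tau\Big)^2+\sum_ic_i^2s_i^2 .
\]
The task is to minimize this quadratic over $c$ subject to $\sum_ic_i=1$. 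The objective is strictly convex because $s_i^2>0$, so a stationary point is the unique minimizer.

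The cleanest route is to reparametrize in the weighted-centred basis. I will write $c_i=p_i(\beta_0+\beta_1(t_i-\bar t))+d_i$, where $d$ is $\sum p_i^{-1}$-orthogonal to the span of $p_i$ and $p_i(t_i-\bar t)$. In the weighted inner product $\langle c,c'\rangle=\sum_ic_ic_i's_i^2$, these three components are mutually orthogonal. The constraint fixes $\sum_ic_i=\beta_0S_0=1$, so $\beta_0=1/S_0$. Next, $\sum_ic_it_i=\bar t+\beta_1S_{tt}$; here the $d$ part contributes nothing because $d$ is orthogonal to $p_it_i$. Finally, $\sum_ic_i^2s_i^2=1/S_0+\beta_1^2S_{tt}+\|d\|^2$.

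The MSE thus becomes
\[
q\big(\bar t-\tau+\beta_1S_{tt}\big)^2+\frac1{S_0}+\beta_1^2S_{tt}+\|d\|^2 .
\]
It is minimized at $d=0$. Differentiating in $\beta_1$ gives $qS_{tt}(\bar t-\tau+\beta_1S_{tt})+\beta_1S_{tt}=0$, so
\[
\beta_1=\frac{(\tau-\bar t)q}{1+qS_{tt}}=\frac{(\tau-\bar t)\kappa}{S_{tt}} .
\]
Substituting back, $\widehat\xi=\sum_ic_iy_i=\bar y+\beta_1\sum_ip_i(t_i-\bar t)y_i=\bar y+(\tau-\bar t)\kappa\widehat v$, which is \eqref{eq:blup}.

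For the error, set $e=\tau-\bar t$. Plugging the optimal $\beta_1$ into the MSE gives
\[
\frac1{S_0}+q e^2\Big(\frac{1}{1+qS_{tt}}\Big)^2+\frac{e^2q^2S_{tt}}{(1+qS_{tt})^2}=\frac1{S_0}+\frac{e^2q}{1+qS_{tt}},
\]
which is \eqref{eq:blupvar}.

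The main obstacle is the orthogonal decomposition step. I must check that $S_{tt}>0$ makes the two basis directions independent. I must also check that general $c$ satisfying the constraint really decompose as claimed, with $d$ orthogonal in the $s_i^2$-weighted inner product to $p_i$ and to $p_i(t_i-\bar t)$. Both follow from $\langle p,d\rangle=\sum_i d_i$ and $\langle p(t-\bar t),d\rangle=\sum_i(t_i-\bar t)d_i$. As a fallback, I will verify the minimizer directly by Lagrange multipliers: the stationarity condition $c_is_i^2=\lambda-q(\sum_jc_jt_j-\tau)t_i$ shows that the optimal $c_i$ lie in the span of $p_i$ and $p_it_i$, which justifies setting $d=0$.
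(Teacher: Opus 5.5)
Your proposal is correct and follows essentially the same route as the paper. The paper also splits an unbiased linear predictor into the weighted-centred level term $\bar y$, a slope term proportional to $\widehat v$, and a remainder orthogonal (in the inverse-variance metric) to the constant and time columns, and then minimizes the resulting scalar shrinkage risk. Your coefficient-space identity $\sum_i c_i^2 s_i^2 = 1/S_0 + \beta_1^2 S_{tt} + \|d\|^2$, together with $\sum_i d_i t_i = 0$, makes explicit the paper's brief claim that this remainder carries no level or velocity component and only adds nonnegative error variance.
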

\begin{proof}
Weighted centering gives $\bar y=\alpha+v\bar t+\bar\epsilon$ and $\widehat v=v+e_v$, where
$\operatorname{Var}(\bar\epsilon)=1/S_0$, $\operatorname{Var}(e_v)=1/S_{tt}$ and $\operatorname{Cov}(\bar\epsilon,e_v)=0$. Among centered predictors $\bar y+(\tau-\bar t)c\widehat v$, the risk is
\[
\frac1{S_0}+(\tau-\bar t)^2\left[(1-c)^2q+\frac{c^2}{S_{tt}}\right].
\]
Its unique minimum is attained at $c=qS_{tt}/(1+qS_{tt})$. Any additional linear combination orthogonal, in the inverse-variance metric, to the constant and linear time columns contains no level or velocity component and adds nonnegative error variance. Thus the minimum is over all linear predictors unbiased for every $\alpha$. Substitution gives \eqref{eq:blupvar}.
\end{proof}
The theorem uses second moments only. Under joint Gaussian assumptions and a diffuse prior for the level, the same expression is the conditional mean. When velocity information is weak, $\kappa$ tends to zero and the predictor approaches the weighted history mean. When velocity is well resolved, $\kappa$ approaches one and ordinary linear extrapolation is recovered.

Shrinking only an additional $\tau\widehat v$ term while leaving the fitted endpoint unchanged does not produce \eqref{eq:blup}. The endpoint already contains velocity-estimation noise. The entire horizon $\tau-\bar t$ must use the same shrinkage.

\subsection{Application to the array}
For the numerical receiver, each coordinate uses the nominal local variance
\begin{equation}
s_i^2=v_{\eta,i}+\sigma_R^2,\qquad
q=\frac12\left(\frac{7500}{599000}\right)^2,
\label{eq:nominal}
\end{equation}
where $v_{\eta,i}$ is the supplied angular-estimation variance and $\sigma_R$ is the attitude measurement standard deviation in radians. The velocity moment is set from the declared satellite-speed scale and vertical separation, with an isotropic horizontal heading. It is fixed before evaluating the trajectories.

After applying \eqref{eq:blup} to the three local coordinates, the receiver forms
\begin{equation}
\widehat\eta_\tau=P\widehat R_\tau^T\normunit(\widehat\xi_\tau),\qquad
\normunit(v)=v/\|v\|.
\end{equation}
The nominal directional covariance used for beam selection is $(V_\tau+\sigma_R^2)I_2$. The leading eigenvector of its conditional response covariance supplies the mean-gain beam. The current attitude uncertainty remains in this expression; historical averaging cannot remove uncertainty in the current frame.

The linear prediction theorem applies to the local moment model. Unit-vector normalization and exact rotations subsequently implement the geometry. Their performance is measured against the nonlinear trajectories in the next section. The receiver uses historical samples and application-time attitude measurements; future true directions are used only to evaluate its error.

\section{Numerical Results}
\subsection{Dynamic comparison with identical observations}
The dynamic experiment uses four UAVs and half-wavelength $8\times8$ transmit and receive UPAs. Satellite and UAV altitudes are 600 km and 1 km, respectively. UAV horizontal positions are uniform in area within a 25 km disk. The initial satellite position is uniform over $[-20,20]\times[-10,10]$ km in the horizontal plane; its speed is 7.5 km/s with a uniformly distributed horizontal heading. This local straight-line trajectory describes the short simulated interval. UAV positions are fixed, while roll and pitch are sinusoidal at 1.7 and 2.3 Hz, with independently randomized initial phases.

There are 128 independent trajectories, each with 300 evaluated epochs separated by 10 ms. Eight earlier epochs initialize the nine-point history window. The application age is 80 ms. In the main experiment, each attitude sinusoid has a peak amplitude of $3^\circ$, and the independent components of attitude measurement error have standard deviation $0.1^\circ$. Historical and application-time attitude measurements are both available; their errors are independent. The front hemisphere is known.

Angular observations use independent Gaussian errors with the direction-cosine CRLB covariance of Appendix B and $L=256$ snapshots. This is an efficient-estimation approximation with successfully reconstructed reference streams and perfect common-stream cancellation. The experiment examines prediction and beam selection at the specified measurement accuracy. Nominal angular error variances are supplied to the predictor as calibrated receiver statistics.

All receivers share the same propagation realizations, angular observations, attitude measurements and transmit resources. The common transmit beam is the normalized sum of the user steering vectors. Private beams are the normalized columns of
\begin{equation}
A_t(A_t^HA_t+\lambda I)^{-1},\qquad \lambda=K/10^{\mathrm{SNR}_{\rm dB}/10},
\end{equation}
where $A_t$ contains the normalized departure steering vectors. Common-stream power is 75\% of total power; each private stream receives 6.25\%. At a 20 GHz carrier, 15 MHz bandwidth, 300 K receiver-noise temperature and 3 dBi element gains, the total transmit power is 11.73559 W. It sets the ensemble mean total aligned receive SNR to 10 dB in the free-space model. These transmit beams and powers define a common reference for the receive comparison; the transmit optimization of \cite{source} is a separate problem.

Six receive strategies are compared. \emph{Stale AoA} holds the latest local angular observation. \emph{Unrestricted extrapolation} is the preceding dual-axis receiver: it transforms the nine observations into the inertial frame, fits an unweighted line, and extrapolates the full fitted velocity. \emph{Frame transport only} uses the fitted latest-epoch intercept without additional physical displacement. \emph{BLUP steering} applies \eqref{eq:blup} and steers towards the resulting direction. \emph{BLUP with eigenbeam} uses the same predicted mean with the Gaussian mean-gain optimum \eqref{eq:gainopt}. \emph{Ideal alignment} uses the true application-time direction and gives the known-state upper limit.

Table~\ref{tab:dynamic} reports mean minimum spectral efficiency, mean normalized receive gain, and two-coordinate direction-cosine RMSE,
\begin{equation}
\operatorname{RMSE}_{\eta}=\sqrt{\E\|\widehat\eta-\eta\|_2^2}.
\end{equation}
The new predictor reduces RMSE from 0.00641638 to 0.00312557, a 51.29\% reduction. The mean shrinkage factor is 0.06311: the physical slope is poorly resolved over the short window, so unrestricted extrapolation introduces appreciable noise. The mean minimum rate rises from 0.88313736 to 0.88361873 bit/s/Hz, or 0.05451\%. The modest rate increment is consistent with the preceding receiver already operating close to ideal alignment.

\begin{table}[tb]
\centering\small
\caption{Dynamic receive comparison: 10 dB aligned SNR, $3^\circ$ attitude peak, 80 ms age and $0.1^\circ$ attitude measurement noise. Rates are in bit/s/Hz.}
\label{tab:dynamic}
\begin{tabular}{lrrr}\toprule
Receiver & Mean minimum rate & Mean gain & Direction RMSE\\\midrule
Stale AoA & 0.83922443 & 0.87572648 & 0.05079493\\
Unrestricted extrapolation & 0.88313736 & 0.99787060 & 0.00641638\\
Frame transport only & 0.88357875 & 0.99935847 & 0.00351963\\
BLUP steering & 0.88361872 & 0.99949402 & 0.00312557\\
BLUP with eigenbeam & 0.88361873 & 0.99949404 & 0.00312557\\
Ideal alignment & 0.88376685 & 1.00000000 & 0\\\bottomrule
\end{tabular}
\end{table}

The paired rate difference between BLUP with eigenbeam and unrestricted extrapolation is $4.81372\times10^{-4}$ bit/s/Hz. Its approximate 95\% confidence interval is $[4.66147,4.96597]\times10^{-4}$, calculated from the 128 independent trajectory means. Individual time slots within a trajectory are not treated as independent replicates. Relative to stale AoA, the mean rate improves by 5.28992\%; the paired difference is 0.04439430 bit/s/Hz with interval $[0.04333755,0.04545106]$. The eigenbeam contributes only $7.17\times10^{-9}$ bit/s/Hz beyond BLUP steering in this narrow-posterior case. Almost all of the additional benefit therefore comes from the predictor.

At a $5^\circ$ attitude peak, stale AoA achieves 0.74384739 bit/s/Hz, unrestricted extrapolation achieves 0.88313823, and BLUP with eigenbeam achieves 0.88361907. The new receiver improves on stale AoA by 18.79037\%, while its increment over unrestricted extrapolation remains 0.05445\%. With a stationary satellite and zero attitude oscillation, the corresponding stale, unrestricted and BLUP-with-eigenbeam rates are 0.88372605, 0.88321511 and 0.88372747. The same nonzero velocity moment is retained in this stationary comparison. Shrinkage suppresses the noise of an unnecessary slope estimate.

\subsection{Sensitivity to attitude measurement accuracy}
Figure~\ref{fig:dynamic} and Table~\ref{tab:attnoise} increase attitude measurement noise while keeping the $3^\circ$ physical attitude peak, 80 ms age and all transmit resources fixed. At $2^\circ$ measurement noise, unrestricted extrapolation yields 0.66143711 bit/s/Hz. BLUP steering yields 0.82484480, and BLUP with eigenbeam yields 0.82611396. The improvement over unrestricted extrapolation is 24.89682\%. The remaining current-frame uncertainty limits performance even after historical prediction is improved. In this case, stale AoA still achieves 0.83922443 bit/s/Hz because it does not use the inaccurate attitude measurements; neither attitude-assisted method is uniformly preferable across all telemetry accuracies.

\begin{table}[tb]
\centering\small
\caption{Mean minimum rate under increasing attitude measurement noise; all other settings follow Table~\ref{tab:dynamic}.}
\label{tab:attnoise}
\begin{tabular}{rrrrr}\toprule
Noise std. ($^\circ$) & Unrestricted & BLUP steering & BLUP eigenbeam & Increase (\%)\\\midrule
0.1 & 0.883137 & 0.883619 & 0.883619 & 0.0545\\
0.5 & 0.874402 & 0.881072 & 0.881075 & 0.7632\\
1.0 & 0.838925 & 0.871982 & 0.872036 & 3.9468\\
2.0 & 0.661437 & 0.824845 & 0.826114 & 24.8968\\\bottomrule
\end{tabular}
\end{table}

\begin{figure}[tb]
\centering\includegraphics[width=\linewidth]{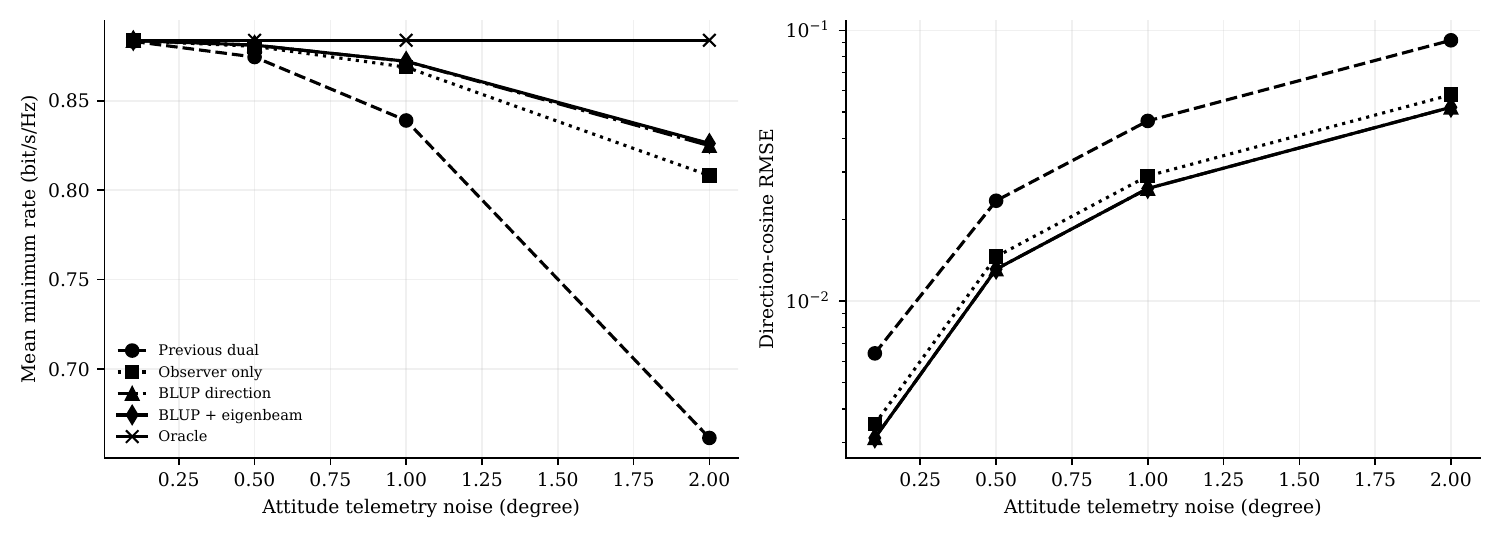}
\caption{Effect of attitude measurement noise on the dynamic receive comparison. ``Previous dual'' denotes unrestricted inertial extrapolation. The BLUP steering and eigenbeam curves nearly coincide at low noise. Ideal alignment has zero direction error and is omitted from the logarithmic error panel.}
\label{fig:dynamic}
\end{figure}

Performance is insensitive to the velocity-moment scale over the tested range. Multiplying the predeclared $q$ by 0.25 and 4 gives mean rates of 0.88361758 and 0.88360653 bit/s/Hz, respectively, compared with 0.88361873 at the nominal value. These alternatives retain the improvement over unrestricted extrapolation. The reported default remains the moment specified in \eqref{eq:nominal}.

\subsection{Direct rate optimization under directional uncertainty}
A separate experiment isolates the choice of digital receive beam under a supplied directional posterior. It uses four independent user directions around array boresight and the same $8\times8$ receive array. For each user, the direction-cosine error is an isotropic zero-mean Gaussian with scale $\sigma_\eta$, conditioned on the physical disk $u^2+v^2<1$. Thus $\sigma_\eta$ is a dimensionless Gaussian scale, rather than an attitude angle. Conditioning has negligible effect at $\sigma_\eta\le0.15$, but ensures that every sampled direction is realizable.

The common and private aligned SNR coefficients at every user are
\begin{equation}
(q_{k,c},q_{k,1},q_{k,2},q_{k,3},q_{k,4})
=(7.5,0.625,0.625,0.625,0.625).
\end{equation}
They represent an overlapping line-of-sight downlink with total aligned SNR 10 dB and the same 75\% common-stream power fraction. These coefficients are held fixed throughout the comparison. This posterior experiment does not reuse the distinct departure geometries and regularized transmit beams of the trajectory experiment.

A scrambled Sobol set of $J=512$ joint scenarios defines the design objective. A separately scrambled set of 8192 scenarios evaluates each final receive beam. The random seeds are 20260916 and 20261916, respectively. The three designs are center-matched steering, the principal eigenvector of the design-sample response covariance, and direct optimization of mean minimum rate. The eigenbeam exactly maximizes the mean gain on the design scenarios. Both optimized designs use the same 512 scenarios, and all three use unit-norm complex digital weights.

The semidefinite iteration uses at most 300 updates and a nominal supergradient-gap tolerance of $2\times10^{-5}$ bit/s/Hz. After leading-eigenvector extraction, at most 120 projected updates refine the rank-one beams. Table~\ref{tab:rates} reports their rates on the separate evaluation set. Figure~\ref{fig:rate} compares the achieved rate and mean gain.

\begin{table}[tb]
\centering\small
\caption{Mean minimum rate on 8192 evaluation scenarios. The increase is relative to center-matched steering; rates are in bit/s/Hz.}
\label{tab:rates}
\begin{tabular}{rrrrr}\toprule
$\sigma_\eta$ & Center matched & Mean-gain eigenbeam & Rate optimized & Increase (\%)\\\midrule
0.00 & 0.696812 & 0.696812 & 0.696812 & 0.000\\
0.03 & 0.672958 & 0.673272 & 0.673334 & 0.056\\
0.06 & 0.578822 & 0.586115 & 0.590006 & 1.932\\
0.10 & 0.324249 & 0.365670 & 0.417121 & 28.642\\
0.15 & 0.111900 & 0.142828 & 0.266322 & 138.001\\
0.20 & 0.044104 & 0.051012 & 0.180507 & 309.280\\\bottomrule
\end{tabular}
\end{table}

At $\sigma_\eta=0.10$, direct rate optimization increases mean minimum rate from 0.32424863 to 0.41712132 bit/s/Hz, an improvement of 28.64243\%. It also improves on the mean-gain eigenbeam by 14.07031\%. The rate-optimized design accepts a lower mean array gain in exchange for better gain distribution across the uncertain directions and weakest-user events. This is the multiuser counterpart of the distinction established in Section 5.1.

\begin{figure}[tb]
\centering\includegraphics[width=\linewidth]{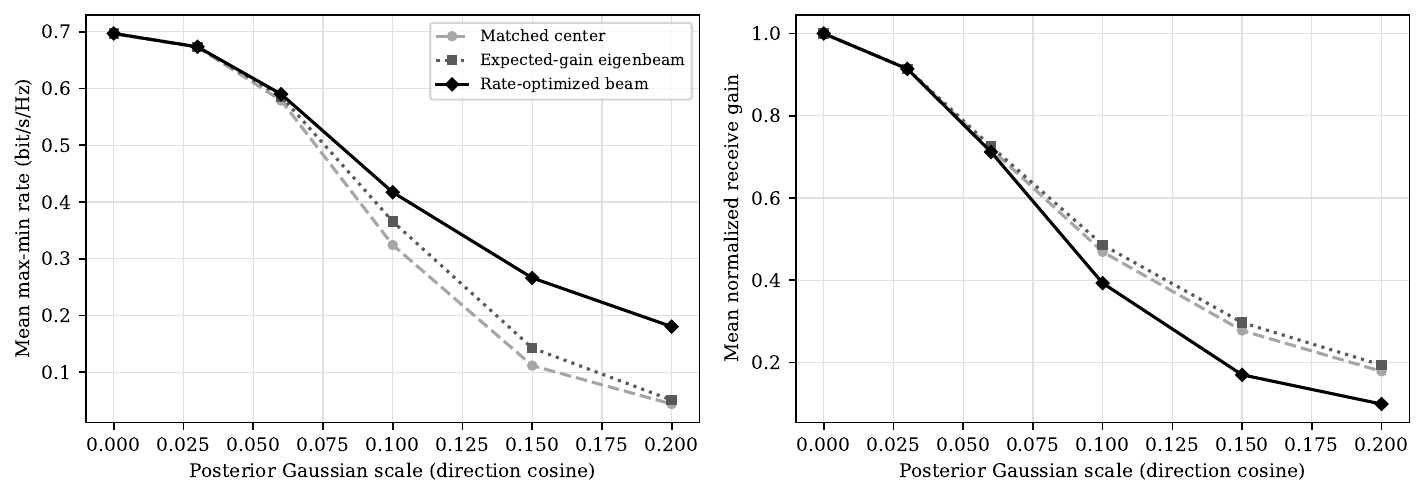}
\caption{Receive design under the conditional directional distribution. Direct optimization improves mean minimum rate even when the mean-gain eigenbeam has the higher average array gain. Each plotted rate uses an implementable rank-one beam and the separate evaluation scenarios.}
\label{fig:rate}
\end{figure}

\subsection{Distance from the finite-scenario global optimum}
Table~\ref{tab:bounds} compares the achieved design objective $L_J$ with the global upper bound $U$ from \eqref{eq:upper}. At $\sigma_\eta=0.10$, the feasible rank-one design achieves 0.41708656 bit/s/Hz, and every rank-one design on the same scenarios is bounded above by 0.41731418. Its distance from the global optimum is therefore at most 0.00022761 bit/s/Hz, or 0.05457\% of its achieved design value. At $\sigma_\eta=0.06$, the difference is only $1.44\times10^{-6}$ bit/s/Hz.

\begin{table}[tb]
\centering\small
\caption{Finite-scenario optimality bounds for the actual rank-one beams. All quantities are in bit/s/Hz and refer to the 512 design scenarios.}
\label{tab:bounds}
\begin{tabular}{rrrr}\toprule
$\sigma_\eta$ & Achieved $L_J$ & Global upper bound $U$ & $U-L_J$\\\midrule
0.03 & 0.673278016 & 0.673280737 & $2.721\times10^{-6}$\\
0.06 & 0.589733387 & 0.589734822 & $1.435\times10^{-6}$\\
0.10 & 0.417086564 & 0.417314177 & $2.276\times10^{-4}$\\
0.15 & 0.267036632 & 0.271962996 & $4.926\times10^{-3}$\\
0.20 & 0.182139883 & 0.191339746 & $9.200\times10^{-3}$\\\bottomrule
\end{tabular}
\end{table}

Figure~\ref{fig:bound} shows the upper-bound gap during the relaxed optimization. The final bound on the actual receive beam also includes rank-one extraction and refinement, as reported in Table~\ref{tab:bounds}. The larger differences at $\sigma_\eta\ge0.15$ leave more room for improvement despite the substantial rate gains over center matching. The calculations establish near-global optimality for the stated finite design problem at the smaller gaps; they do not establish an exact population optimum.

\begin{figure}[tb]
\centering\includegraphics[width=.78\linewidth]{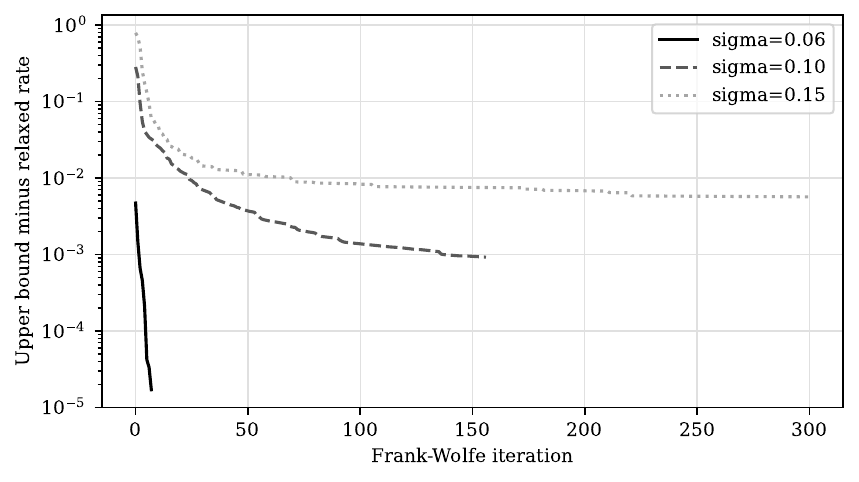}
\caption{Gap between the best global upper bound and the relaxed design objective during conditional-gradient optimization. The final rank-one gaps are given separately in Table~\ref{tab:bounds}.}
\label{fig:bound}
\end{figure}

The separate evaluation set estimates performance under the directional distribution, whereas $U$ bounds a different, fixed design sample average. Its numerical value therefore need not exceed an evaluation-set rate. Scrambled Sobol points within one set are dependent; independent-sample confidence intervals are not assigned to those individual points. The mathematical supporting bound is evaluated in floating-point arithmetic.

\subsection{Simulation outputs}
The accompanying source package includes both simulation programs, their common geometry and rate functions, the resulting trajectory statistics, the design and evaluation tables, the complex receive beams, and the convergence histories. The dynamic experiment uses seed 20260915. All tables and figures in this section are obtained from the included numerical outputs. The distinction between exact known-state or mean-gain optima, best linear prediction under a local moment model, and a bounded finite-scenario rate design is retained throughout the comparison.

\section{Conclusion}
Observation-axis optimization has distinct solutions for distinct physical objectives. With a known line-of-sight response, noise-weighted matching gives the global receive optimum for all RSMA decoding stages. With a directional posterior, a covariance eigenvector gives the exact mean-gain optimum, while equal-amplitude hardware is optimized by steering towards the Gaussian conditional mean. Direct optimization of mean fair rate generally produces a different beam. Its finite-scenario optimum can be bounded through the joint concavity of the RSMA water level.

The dynamic comparison shows that the principal improvement near ideal alignment comes from controlling prediction noise. A precision-weighted centered predictor removes more than half of the preceding direction error while retaining the same observations and transmit resources. Under broader directional uncertainty, direct rate optimization distributes receive gain more effectively across possible responses and materially improves the minimum user rate. PTFE preserves a common coherent reference throughout these weight changes. The resulting design separates physical prediction, observation transport and rate optimization, with an explicit objective and solution for each.

\appendix
\section{Supergradient Used in the Rate Optimization}
For scenario $j$, choose a common-rate bottleneck $b_j$ and an active smallest-private-rate group $A_j$ of size $m_j$ in \eqref{eq:subset}. A valid gain supergradient is
\begin{equation}
c_k^{(j)}=\frac1{m_j}\left[
\mathbf1_{k=b_j}f'_{c,k}(g_k^{(j)})+
\mathbf1_{k\in A_j}f'_{p,k}(g_k^{(j)})\right].
\end{equation}
Any active branch may be selected at a tie. The block matrix supergradient is
\begin{equation}
G_k=\frac1J\sum_j c_k^{(j)} A_k^{(j)}.
\end{equation}
It is Hermitian positive semidefinite because the rate derivatives are nonnegative. The conditional-gradient direction therefore requires only its leading eigenvector. The supporting inequality follows from the selected concave branch: it equals $F_J$ at the current point and dominates $F_J$ elsewhere.

\section{Relation to the Receiver-Side Angular Information Bound}
With successful common and intended-private decoding, the array mean contains known symbols and the undecoded streams have covariance $\beta a_ra_r^H+\sigma^2I$. For a centered $M_x\times M_y$ UPA and $L$ independent snapshots, the expected diagonal Fisher information of \cite{source} becomes
\begin{equation}
J_{pp}=\frac{L\pi^2N(M_p^2-1)}{6\sigma^2}
\left(\rho+\frac{N\beta^2}{\sigma^2+N\beta}\right),\qquad J_{uv}=0,
\end{equation}
where $\rho$ is the received common-plus-intended-private power and $\beta$ is the received power of the other private streams. The simulations use $J_{pp}^{-1}$ as the variance of an efficient-estimation approximation. This preserves the original signal-information model while making the sample count explicit. The optimal prediction and beam results require supplied error statistics; an actual estimator may have larger variance or non-Gaussian errors, in which case its measured conditional statistics should replace this approximation.

\end{document}